\documentclass[3p,times,preprint]{elsarticle}
\usepackage[T1]{fontenc}
\usepackage[utf8]{inputenc}
\usepackage{graphicx}
\usepackage{hyperref}
\usepackage{amssymb}
\usepackage{amsmath,bm}
\usepackage{lmodern} 
\usepackage{textcomp} 
\usepackage{sectsty}
\usepackage{fancybox}
\usepackage{listings}
\usepackage{charter}
\usepackage[titletoc]{appendix}
\usepackage{epsfig}
\usepackage{tabularx}
\usepackage{enumitem}
\usepackage[ruled,vlined]{algorithm2e}
\usepackage{latexsym}
\usepackage{alltt}
\usepackage{setspace}
\usepackage{datetime}

\usepackage{ntheorem}

\newdateformat{mydate}{\monthname[\THEMONTH], \THEYEAR}
\def\qed{\hfill$\Box$}

\newproof{proof}{Proof}
\newtheorem{theorem}{\textbf{Theorem}}
\newtheorem{corollary}[theorem]{\textbf{Corollary}}

\newtheorem{proposition}[theorem]{\textbf{Proposition}}
\newtheorem{definition}[theorem]{\textbf{Definition}}

\newproof{pot1}{Proof of Theorem \ref{thm:constup}}

\newenvironment{statement}[1][Statement]{\noindent {\it #1}.}{}


%


\begin{document}

\begin{frontmatter}
	
	
	
	\title{Partially perfect hash functions for intersecting families}
	
	

	\author[label1]{Tapas Kumar Mishra \fnref{ack}}
	\ead{mishrat@nitrkl.ac.in}
	
	
	\cortext[cor1]{Corresponding author}
	\address[label1]{Department of Computer Science and Engineering, National Institute of Technology, Rourkela, 768009, India}

\begin{abstract}
%
Consider a large network with unknown number of nodes. Some of these nodes coordinate to perform tasks.
The number of such coordination groups is also unknown. The only information about the network available is that any two coordinating groups share at least $t$ nodes. To complete a particular task in a day,  at least $p$ nodes of the corresponding coordinating group must get different time slots out of the $r$ available slots per day. Is there a way of estimating the number of days required such that every coordinating group gets at least one day where it can complete the task? As it turns out, this problem is a special case of \textit{partially} perfect hash functions for intersecting families.
\end{abstract}
\begin{keyword}
	perfect hash functions \sep intersecting families \sep Hypergraph coloring
	\PACS 02.10.Ox
	\MSC[2010] 05D05  \sep 05C50 \sep 05C65  
\end{keyword}

\end{frontmatter}

\section{Introduction}

A hash function $h$ from $[n]$ into $[b]$ is said to be a \textit{perfect}
with respect to a subset $S \subseteq [n]$ provided that $h$ is one-to-one on $S$. A
collection $H$ of functions from $[n]$ into $[b]$ is called a $(b, k)$-family of perfect
hash functions provided that for each subset $S\subseteq [n]$ of size $k$, there is a function $h\in H$
which is perfect with respect to $S$. A $(b, k)$-family of perfect hash functions provides
a means for storing subsets of size $k$ into tables with $b$ cells. Fredman and Komlos \cite{fredman1984size} define $Y(b, k, n)$ to be the minimum
size of any $(b, k)$-system. They proved that 
\begin{align}
bg(\alpha) + \mathcal{O}(\log b) \leq \log Y(b, k, n)
 \leq bg (\alpha) + \log \log n + \mathcal{O} (\log b) \nonumber \\
\text{where } g(a)= (1-\alpha) \log (1-\alpha)+\alpha \log e.
\end{align}


Consider the following notion of partially perfect hash functions. Let $h: [n] \rightarrow [b]$ be a hash function. 
For a subset $S \subseteq [n]$, let $h(S):= \{h(s)|s \in S\}$. 
\begin{definition}
A hash function $h$ from $[n]$ into $[b]$ is said to be a \textit{partially $p$-perfect}
with respect to a subset $S \subseteq [n]$ provided that $|h(S)|\geq \min(p,|S|)$.
A collection $H$ of functions from $[n]$ into $[b]$ is called a $(b,k)$-family of partially $p$-perfect
hash functions provided that for each subset $S\subseteq [n]$ of size $k$, there is a function $h\in H$
which is partially $p$-perfect with respect to $S$.
\end{definition}

Let $H_1, H_2, \ldots , H_t$ be $p$-partite $r$-uniform hypergraphs such that $H_1 \cup H_2 \cup \ldots \cup H_t = K_n^r$, where $K_n^r$ denotes a complete $r$-uniform hypergraph on $n$ vertices. Then, we have the following lower bound on $t$ (see \text{\cite{Radhakrishnan01entropyand}} for details).
\begin{align}
t \geq \frac{\binom{n}{r-2}(n-r+2)\log(n-r+2)}{(k-r+2)(n/k)^{r-1}\binom{k}{r-2}\log(k-r+2)}.
\label{1:eq}
\end{align}
See Radhakrishnan \cite{Radhakrishnan01entropyand} for related problems and results.
The connection between $t$ and a minimum cardinality $(b,k)$-family of partially $p$-perfect hash function is immediate.

We study an interesting variation of partially $p$-perfect hash functions for \textit{intersecting families}.
A hypergraph is called $t$-intersecting if every pair of hyperedges share at least $t$ vertices. The set of hyperedges of a $t$ intersecting hypergraph is called as a $t$-intersecting family.
\begin{definition}
A $(G,p,b)$ system for a hypergraph $G$ is a collection $H$ of partially $p$-perfect
hash functions such that for each hyperedge $e \in G$, there is a function $h\in H$
which is partially $p$-perfect with respect to $e$. 
\end{definition}
Let $\lambda(G,p,b)$ denote the minimum cardinality of a $(G,p,b)$ system where $G$ is a hypergraph.
Let $\lambda(t,p,b)$ denote the maximum of $\lambda(G,p,b)$ where $G$ is a $t$-intersecting family. In this paper, we study the parameter $\lambda(t,p,b)$ in detail.

\subsection{Main result}

In Section \ref{sec:2.1}, we prove the following result.
\begin{theorem}\label{thm:1}
	Let $t \geq p \geq 2$ and $b \geq p$ such that $b > (p-1)(t+1)$.
	Let  $x$ denote an integer such that $\binom{b}{p-1}(\frac{p-1}{b})^{tx}< 1$.
	Then, $\lambda(t,p,b) \leq x$. In particular, if $b > (p-1)(t+1)$,  for every $t$-intersecting hypergraph $G$, there exists a pair of hash functions from $[n]$ to $[b]$  that constitute a $(G,p,b)$ system.
\end{theorem}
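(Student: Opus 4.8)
The plan is to argue by the probabilistic method. Sample $h_1,\dots ,h_x\colon [n]\to [b]$ independently and uniformly at random, and show that the probability that $\{h_1,\dots ,h_x\}$ is \emph{not} a $(G,p,b)$ system is strictly smaller than $1$; this immediately yields $\lambda(t,p,b)\le x$, and for the ``in particular'' part it then only remains to check that $x=2$ satisfies the stated inequality when $b>(p-1)(t+1)$. Since $G$ is $t$-intersecting with $t\ge p$, any two hyperedges share at least $t\ge p$ vertices, so every hyperedge has at least $t\ge p$ vertices and hence ``partially $p$-perfect with respect to $e$'' just means $|h(e)|\ge p$. Call a hyperedge $e$ \emph{bad} for the collection if $|h_i(e)|\le p-1$ for every $i$; then the collection fails exactly when some hyperedge is bad. (If $|G|\le 1$, or if $G$ has a hyperedge $f$ with $|f|\le b$, then taking $h$ injective on $f$ already gives a one-element system, since $|h(e)|\ge |h(e\cap f)|=|e\cap f|\ge t$ for every $e$; so we may assume $|G|\ge 2$ and all hyperedges have size $>b$.)

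Next I would use the $t$-intersecting structure to turn ``some bad hyperedge exists'' into a bounded, local event. Fix a reference hyperedge $f\in G$ (chosen of minimum size, and possibly with additional care). For every $e\in G$ we have $|e\cap f|\ge t$, and if $e$ is bad then, for each $i$, the at-least-$t$ vertices of $e\cap f$ are all mapped by $h_i$ into some $(p-1)$-element subset of $[b]$. The crucial claim — and, I expect, the main obstacle — is that this forces a \emph{single common} $(p-1)$-set: that if the collection fails then there is a set $T\in\binom{[b]}{p-1}$ and a $t$-element vertex set $S$ (determined by $G$ and $f$, not by the hash functions) with $h_i(S)\subseteq T$ for every $i$. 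A bare union bound over all hyperedges of $G$ is hopeless because a $t$-intersecting family can be arbitrarily large, and a union bound over all $t$-subsets of $f$ together with all $x$-tuples of target sets is far too wasteful; squeezing the count down to the single factor $\binom{b}{p-1}$ is precisely where the geometry of $t$-intersecting families and the hypothesis $b>(p-1)(t+1)$ have to be used — for instance by choosing $f$ so that the traces $e\cap f$ are essentially pinned down, or by first extracting a ``core'' of $G$ on which the functions can be controlled directly and then handling the rest.

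Granting that claim, the rest is a short computation. For a fixed $t$-set $S$ and a fixed $(p-1)$-set $T$, the $tx$ values $h_i(v)$ with $v\in S$ and $1\le i\le x$ are independent and uniform in $[b]$, so the probability that all of them lie in $T$ equals $\bigl(\tfrac{p-1}{b}\bigr)^{tx}$. Summing over the $\binom{b}{p-1}$ choices of $T$ gives $\Pr[\{h_1,\dots ,h_x\}\text{ fails}]\le \binom{b}{p-1}\bigl(\tfrac{p-1}{b}\bigr)^{tx}$, which is $<1$ by hypothesis, so a $(G,p,b)$ system of size $x$ exists. Finally, to obtain the ``in particular'' statement I would simply verify that $x=2$ meets the hypothesis once $b>(p-1)(t+1)$: from $\binom{b}{p-1}<b^{\,p-1}$ it suffices that $(p-1)^{2t}<b^{\,2t-p+1}$, i.e.\ $b>(p-1)^{2t/(2t-p+1)}$; since $p\le t+1$ we have $2t/(2t-p+1)\le 2$, so it is enough that $b>(p-1)^2$, and this follows from $b>(p-1)(t+1)$ because $p-1\le t+1$.
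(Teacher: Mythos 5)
Your probabilistic setup and your closing computation coincide with the paper's, but your argument has a genuine gap exactly where you flag it: the ``crucial claim'' that failure of the random collection forces a single $t$-element vertex set $S$ (determined by $G$ and a reference edge $f$ alone) together with one common $(p-1)$-set $T$ satisfying $h_i(S)\subseteq T$ for every $i$ is never proved, and as stated it is very unlikely to be provable. If $e$ is a bad hyperedge, the images $h_i(e)$ may be \emph{different} $(p-1)$-sets for different $i$, and which hyperedge is bad (hence which trace $e\cap f$ is relevant) depends on the random functions themselves, so neither $S$ nor $T$ can be pinned down in advance from the geometry of the family. Since your entire reduction of the union bound to the single factor $\binom{b}{p-1}$ rests on this claim, the proof as proposed does not go through; the suggested repairs (choosing $f$ carefully, extracting a ``core'') are left entirely unexecuted.

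The idea you are missing is not structural but measure-theoretic: the paper invokes the Dinur--Safra/Friedgut bound (Theorem \ref{thm:2}, used via Corollary \ref{cor:1}) that a $t$-intersecting family has $q$-biased measure at most $q^{t}$ whenever $q<\frac{1}{t+1}$. Fix a $(p-1)$-set $H\subseteq[b]$; then $S_i=h_i^{-1}(H)$ is a $\frac{p-1}{b}$-biased random subset of $[n]$, and $\frac{p-1}{b}<\frac{1}{t+1}$ is precisely the hypothesis $b>(p-1)(t+1)$. Corollary \ref{cor:1} bounds the probability that $S_i$ contains \emph{some} hyperedge of $G$ by $\left(\frac{p-1}{b}\right)^{t}$ in one stroke, with no union bound over the (possibly unbounded) family --- this is what replaces your attempted pinning argument; independence over the $x$ functions gives $\left(\frac{p-1}{b}\right)^{tx}$, and the only union bound taken is over the $\binom{b}{p-1}$ choices of $H$. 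Your observation that the witnessing $(p-1)$-set could vary with $i$ is a real subtlety (note that the paper's union bound is over a single $H$ common to all $x$ functions), but addressing it requires the biased-measure tool, not the fixed-$S$ claim you propose; your final verification that $x=2$ satisfies $\binom{b}{p-1}\left(\frac{p-1}{b}\right)^{2t}<1$ when $b>(p-1)(t+1)$ is correct and, if anything, cleaner than the paper's.
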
 
The theorem is interesting in the sense that provided $b > (p-1)(t+1)$, every $t$-intersecting hypergraph $G$ has a $(G,p,b)$ system of size at most 2. However, when $b \leq (p-1)(t+1)$, the study of the behaviour of $\lambda(t,p,b)$ remains open.

\section{Bounds on $\lambda(t,p,b)$ }

The fact that $\lambda(t,p,b)\geq \lambda(t+1,p,b)$ follows from the fact that any $t+1$-intersecting family is also $t$-intersecting.
Moreover, $\lambda(t,p,b)$ is non-increasing with increasing $b$ as well, i.e. $\lambda(t,p,b)\geq \lambda(t,p,b+1)$ due to the same containment argument. Further, $\lambda(0,p,b) = \infty$ - if no restriction on the minimum size of intersection between hyperedges is there, then $\lambda(0,p,b)$ must depend on the ground set size  as evident from Equation \ref{1:eq}.

\begin{proposition}\label{prop:p1}
\[	\lambda(1,2,2) =2. \]
\end{proposition}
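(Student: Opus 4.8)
The plan is to prove the two inequalities $\lambda(1,2,2)\ge 2$ and $\lambda(1,2,2)\le 2$ separately. For the lower bound I would exhibit a single $1$-intersecting hypergraph with $\lambda(G,2,2)\ge 2$, and the natural candidate is the triangle $G=\{\{1,2\},\{1,3\},\{2,3\}\}$ on the ground set $\{1,2,3\}$: any two of its hyperedges share a vertex, so $G$ is $1$-intersecting. For any single function $h\colon\{1,2,3\}\to[2]$, the pigeonhole principle forces $h(i)=h(j)$ for some two distinct vertices $i,j$, and then $|h(\{i,j\})|=1<2=\min(2,|\{i,j\}|)$, so $h$ is not partially $2$-perfect with respect to the hyperedge $\{i,j\}\in G$. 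Hence no one-element collection is a $(G,2,2)$ system, which gives $\lambda(G,2,2)\ge 2$ and therefore $\lambda(1,2,2)\ge 2$.

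For the upper bound I would show that every $1$-intersecting hypergraph $G$ on any ground set $[n]$ admits a $(G,2,2)$ system of size $2$. First note that hyperedges of size at most $1$ impose no constraint, since $|h(e)|\ge\min(2,|e|)$ holds automatically when $|e|\le 1$; so we may assume $G$ has a hyperedge of size at least $2$, as otherwise a single function already works. Let $e_0\in G$ be a hyperedge of \emph{minimum} size among those of size $\ge 2$, put $m=|e_0|\ge 2$, and fix a vertex $v^\ast\in e_0$. Define $h_1\colon[n]\to[2]$ by $h_1(v)=1$ for $v\in e_0$ and $h_1(v)=2$ otherwise, and define $h_2\colon[n]\to[2]$ by $h_2(v)=2$ for $v\in e_0\setminus\{v^\ast\}$ and $h_2(v)=1$ otherwise. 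The structural input I would use is that, because $e_0$ is a hyperedge of $G$ and $G$ is $1$-intersecting, every hyperedge $f$ of $G$ meets $e_0$ (including $f=e_0$ itself). Now if $h_1$ is not partially $2$-perfect on $f$, then $f$ is $h_1$-monochromatic, which together with $f\cap e_0\ne\emptyset$ forces $f\subseteq e_0$; if moreover $h_2$ is not partially $2$-perfect on $f$, then $f$ is $h_2$-monochromatic, so (using $f\subseteq e_0$) $f\subseteq\{v^\ast\}$ or $f\subseteq e_0\setminus\{v^\ast\}$ --- the former contradicts $|f|\ge 2$ and the latter contradicts the minimality of $m$. So at least one of $h_1,h_2$ is partially $2$-perfect with respect to $f$, hence $\{h_1,h_2\}$ is a $(G,2,2)$ system; combined with the lower bound this yields $\lambda(1,2,2)=2$.

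The argument is elementary once one adopts the right viewpoint, so I expect the only real ``conceptual'' step is the modelling: observing that a pair of $[2]$-valued functions carries the same information as a $4$-colouring of $[n]$ (colour classes being the cells of the common refinement of the two bipartitions), so that $\lambda(G,2,2)\le 2$ amounts to $G$ having a colouring with no monochromatic hyperedge of size $\ge 2$; and then noticing that an intersecting hypergraph is easy to colour because everything can be anchored to one minimum-size hyperedge $e_0$. In fact the construction above uses only three colours --- ``$v^\ast$'', ``$e_0\setminus\{v^\ast\}$'', and ``outside $e_0$'' --- so it incidentally shows that every intersecting hypergraph all of whose hyperedges have size $\ge 2$ has weak chromatic number at most $3$, which is tight by the triangle. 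The few places I would be careful about in the write-up are the boundary cases (hyperedges of size $\le 1$, and a hypergraph consisting only of such hyperedges) and making the step ``$f$ meets $e_0$'' explicit for $f=e_0$.
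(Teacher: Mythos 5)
Your proposal is correct and takes essentially the same route as the paper: the triangle on $\{1,2,3\}$ gives the lower bound, and the upper bound anchors two $\{c1,c2\}$-valued functions at a minimal hyperedge (the indicator of $e_0$ covering every other hyperedge, plus a split of $e_0$ covering $e_0$ itself). You merely write it out more explicitly than the paper, which leaves the second function and the size-at-most-one hyperedge cases implicit.
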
	
	
\begin{proof}
To prove the lower bound, let $V=\{1,2,3\}$, $G=\{\{1,2\},\{2,3\},\{1,3\}\}$.
$G$ is 1-intersecting. Let $h_1: V \rightarrow \{c1,c2\}$ be a hash function. It is easy to see that at least two elements of $V$ get the same hash value and hence, one hyperedge in $G$ is not partially $2$-perfect.
For the upper bound, consider any $1$-intersecting hypergraph $G$ on the ground set $V$. Pick an hyperedge $e$ such that there is no hyperedge $e'$ with $e' \subset e$. Let $h_1: V \rightarrow \{c1,c2\}$ be a hash function such that each element of $e$ get the hash value $c1$, and 
each element not in $e$ get the hash value $c2$. From the restriction on $e$, it follows that for every $e' \neq e$,
$h_1$ is partially $2$-perfect. For hyperedge $e$, we can choose another hash function.
\qed

\end{proof}

\begin{proposition}\label{prop:p2}
\[	\lambda(t,2,2)=1 \text{ for $t \geq 2$}. \]
\end{proposition}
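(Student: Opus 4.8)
The plan is to translate the statement into a hypergraph $2$-colouring question. Since $p=b=2$, a single hash function $h:[n]\to\{c_1,c_2\}$ is partially $2$-perfect on a hyperedge $e$ with $|e|\ge 2$ exactly when $h$ is non-constant on $e$, i.e.\ when $e$ receives both colours. Hence a $(G,2,2)$ system of size $1$ is precisely a $2$-colouring of $V(G)$ with no monochromatic hyperedge. So the proposition splits into a (trivial) lower bound $\lambda(t,2,2)\ge 1$ — witnessed by the $t$-intersecting family consisting of a single edge $\{1,2\}$, which needs a non-constant hash function — and the real content, the upper bound: \emph{every $t$-intersecting hypergraph with $t\ge 2$ is $2$-colourable}.

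For the upper bound I would argue directly. If $G$ has at most one hyperedge the claim is immediate (any $h$ using both colours on that edge works, and if the lone edge has size $1$ it is partially $2$-perfect automatically), so assume $G$ has at least two hyperedges. Then every hyperedge has size at least $t\ge 2$, because for distinct edges $e,e'$ we have $|e|\ge|e\cap e'|\ge t$. Now pick a hyperedge $e_0$ of minimum cardinality $m$ (so $m\ge t\ge 2$) and fix a vertex $a\in e_0$. Colour every vertex of $e_0\setminus\{a\}$ with $c_2$ and every other vertex, including $a$, with $c_1$. I then verify this colouring leaves no monochromatic edge: the edge $e_0$ itself contains $a$ (colour $c_1$) and, since $m\ge 2$, at least one vertex of $e_0\setminus\{a\}$ (colour $c_2$); any other edge $e$ contains a $c_1$-vertex because $|e|\ge m>m-1=|e_0\setminus\{a\}|$ forces $e\not\subseteq e_0\setminus\{a\}$; and $e$ contains a $c_2$-vertex because $|e\cap e_0|\ge t\ge 2$, so $e$ meets $e_0$ in at least two vertices, at least one of which differs from $a$ and hence lies in $e_0\setminus\{a\}$. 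This exhibits a $(G,2,2)$ system of size $1$, giving $\lambda(G,2,2)\le 1$ for every such $G$, hence $\lambda(t,2,2)\le 1$.

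The step needing the most care is the last verification, and in particular the point where $t\ge 2$ is used: it is exactly the guarantee that $e\cap e_0$ is not reduced to the single vertex $a$. For $t=1$ this fails — the triangle $\{\{1,2\},\{2,3\},\{1,3\}\}$ is not $2$-colourable — which is consistent with Proposition~\ref{prop:p1} giving $\lambda(1,2,2)=2$, so the hypothesis $t\ge 2$ is genuinely needed and not merely convenient. Beyond that, the only things to be careful about are the degenerate cases (empty $G$, or a single hyperedge of size at most $1$), which are disposed of at the outset as indicated above.
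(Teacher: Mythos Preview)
Your proof is correct, but the route differs from the paper's. The paper argues by a greedy colouring: process the vertices in some order, assign $c_1$ unless doing so would complete an all-$c_1$ edge, in which case assign $c_2$; then, if some edge $e$ were all $c_2$, the last vertex $x$ of $e$ was forced to $c_2$ by an edge $e'$ whose other vertices are all $c_1$, and $2$-intersection gives a second common vertex $y\in e\cap e'$ with colour $c_1$, contradicting $e$ monochromatic. Your argument instead fixes a minimum-size edge $e_0$, singles out one vertex $a\in e_0$, and colours $e_0\setminus\{a\}$ with $c_2$ and everything else with $c_1$; minimality handles the $c_1$ side and $t\ge 2$ handles the $c_2$ side. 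Your construction is more explicit and closer in spirit to the paper's proofs of Propositions~\ref{prop:p1} and~\ref{prop:p3} (which also pivot on a chosen edge), whereas the paper's greedy argument for this proposition avoids singling out a special edge and would adapt more readily to on-line or incremental settings. Both isolate $t\ge 2$ at exactly the same place: guaranteeing that any second edge meets the reference edge in a vertex other than the distinguished one.
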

\begin{proof}
	In what follows we prove that $\lambda(2,2,2)=1$ and from non-increasing property of $\lambda(t,p,b)$ with increasing $t$, the proof follows.
	Consider the following greedy algorithm for obtaining the hash function $h_1$, given a ground set $V$ and a $t$-intersecting family $G$.
	Assign the vertices with hash value $c1$ unless making it $c1$ makes every vertex in some $e \in G$ receiving the same hash value - in this case assign the vertex with hash value $c2$. For the sake of contradiction, assume that every vertex in $e$ gets the same hash value. Due to our algorithm, all vertices of $e$  gets the same hash value $c2$. Let $x$ be the last vertex getting hash value $c2$ belonging to $e$. The algorithm assigns $x$ with  hash value $c2$ due to the fact that all other vertices of some other hyperedge $e'$ containing $x$ received $c1$. As the hypergraph is 2-intersecting, $e \cap e'=\{x,y\}$, for some other vertex $y$. From the above argument, $y$ must have received the hash value $c1$ which contradicts our assumption that every vertex in $e$ gets the same hash value. This completes the proof.
	\qed
\end{proof}

\begin{proposition}\label{prop:p3}
\[\lambda(1,2,b) =1\text{ for $b \geq 3$}.\]
\end{proposition}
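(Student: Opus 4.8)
The plan is to establish $\lambda(1,2,b)\le 1$ by producing, for an arbitrary $1$-intersecting hypergraph $G$ on a ground set $V$, a single hash function $h:V\to[b]$ that is partially $2$-perfect with respect to every hyperedge of $G$; the matching bound $\lambda(1,2,b)\ge 1$ is immediate, since the $1$-intersecting family consisting of a single hyperedge of size at least $2$ already forces at least one function in any $(G,2,b)$ system. It is convenient to observe first that being partially $2$-perfect with respect to $e$ just means that $e$ is not monochromatic under $h$ when $|e|\ge 2$ (and is automatic when $|e|=1$).

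First I would fix a hyperedge $e\in G$ that is minimal with respect to inclusion, i.e.\ no hyperedge of $G$ is a proper subset of $e$; such an $e$ exists since $G$ is finite. This mirrors the choice of $e$ in the proof of Proposition~\ref{prop:p1}. I then pick an arbitrary vertex $w\in e$ and fix three distinct values $c_1,c_2,c_3\in[b]$ (here the hypothesis $b\ge 3$ enters). Define $h$ by setting $h(w)=c_3$, $h(v)=c_1$ for every $v\in e\setminus\{w\}$, and $h(v)=c_2$ for every $v\in V\setminus e$.

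Next I would verify that $H=\{h\}$ is a $(G,2,b)$ system. For the hyperedge $e$ itself: if $|e|\ge 2$ then $e$ receives the colour $c_1$ on the nonempty set $e\setminus\{w\}$ and the colour $c_3$ on $w$, so $|h(e)|\ge 2$; if $|e|=1$ there is nothing to check. For any other hyperedge $e'\ne e$: minimality of $e$ rules out $e'\subsetneq e$, so $e'$ contains a vertex outside $e$, coloured $c_2$; and since $G$ is $1$-intersecting, $e'\cap e\ne\emptyset$, so $e'$ also contains a vertex coloured $c_1$ or $c_3$. Hence $e'$ sees at least two colours. Thus $\lambda(G,2,b)\le 1$ for every $1$-intersecting $G$, giving $\lambda(1,2,b)\le 1$, and together with the lower bound, $\lambda(1,2,b)=1$.

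Because the construction is completely explicit, there is no real obstacle here; the only point requiring care is the case analysis separating $e$ from the other hyperedges. The crucial observation is that minimality of $e$ forbids $e'\subsetneq e$, which is exactly what supplies a vertex of colour $c_2$ inside each $e'\ne e$ — and this is precisely where the argument collapses when $b=2$ (Proposition~\ref{prop:p1}), since then $c_3$ is unavailable and $w$ must re-use one of $c_1,c_2$, leaving room for a monochromatic hyperedge through $w$.
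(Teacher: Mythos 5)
Your proof is correct and is essentially the paper's own argument: both pick an inclusion-minimal hyperedge $e$, use two distinct values inside $e$ and a third value on $V\setminus e$, and then use minimality plus the $1$-intersecting property to show every other hyperedge sees at least two values. Your version just makes the split of $e$ explicit (one vertex $w$ versus $e\setminus\{w\}$) and adds the trivial lower bound, which the paper leaves implicit.
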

\begin{proof}
To see that $\lambda(1,2,b)=1$, first observe that $\lambda(1,2,3) \geq \lambda(1,2,b) $ for $r \geq 3$. So, in what follows, we 
prove that $\lambda(1,2,3)=1$; this suffices to prove  $\lambda(1,2,r) =1$.
Given a ground set $V$ and a $t$-intersecting family $G$,
pick a hyperedge $e \in E$ such that there is no hyperedge $e' \in E$ with $e' \subset e$. 
Assign few vertices in $e$ hash value $c1$, other vertices in $e$ hash value $c2$, and assign every other vertex in $V \setminus e$ hash value $c3$. This constitutes the desired hash function.
\qed
\end{proof}

As discussed earlier, $\lambda(t,p,b) \geq \lambda(t+1,p,b)$. But what happens if we modify parameters $t$ and $p$ simultaneously? In fact the monotonicity property changes, which we state in the following theorem.
\begin{theorem}
	$\lambda(t+1,p+1,b) \geq \lambda(t,p,b) $.
\end{theorem}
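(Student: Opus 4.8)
The plan is to reduce the $(t,p)$ problem to the $(t+1,p+1)$ problem by a ``padding'' construction that simultaneously raises the intersection requirement and the spread requirement by one. Fix an arbitrary $t$-intersecting hypergraph $G$ on a ground set $V$; since $\lambda(t,p,b)$ is the maximum of $\lambda(G,p,b)$ over all such $G$, it suffices to show $\lambda(G,p,b)\le \lambda(t+1,p+1,b)$ for this fixed $G$. I would introduce a fresh vertex $v^\star\notin V$, put $V'=V\cup\{v^\star\}$, and define $G'=\{\,e\cup\{v^\star\}: e\in G\,\}$. Any two hyperedges of $G'$ satisfy $|(e_1\cup\{v^\star\})\cap(e_2\cup\{v^\star\})|=|e_1\cap e_2|+1\ge t+1$, so $G'$ is $(t+1)$-intersecting, and therefore $\lambda(G',p+1,b)\le\lambda(t+1,p+1,b)$.

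Next I would take an optimal $(G',p+1,b)$ system $H'$, so that $|H'|=\lambda(G',p+1,b)$, and project it onto $V$: let $H=\{\,h'|_V : h'\in H'\,\}$, a collection of hash functions from $V$ into $[b]$ with $|H|\le |H'|$. The key observation is that for any hyperedge $e\in G$ and any $h'\colon V'\to[b]$ one has $h'(e\cup\{v^\star\})=h'|_V(e)\cup\{h'(v^\star)\}$, hence $|h'|_V(e)|\ge |h'(e\cup\{v^\star\})|-1$. Combining this with $|e\cup\{v^\star\}|=|e|+1$ and the elementary identity $\min(p+1,|e|+1)-1=\min(p,|e|)$, I would conclude that whenever $h'$ is partially $(p+1)$-perfect with respect to $e\cup\{v^\star\}$, its restriction $h'|_V$ is partially $p$-perfect with respect to $e$.

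Putting these together: for each $e\in G$ there is some $h'\in H'$ that is partially $(p+1)$-perfect with respect to $e\cup\{v^\star\}$, and then the corresponding $h'|_V\in H$ is partially $p$-perfect with respect to $e$; thus $H$ is a $(G,p,b)$ system. This gives $\lambda(G,p,b)\le |H|\le |H'|=\lambda(G',p+1,b)\le\lambda(t+1,p+1,b)$, and taking the maximum over all $t$-intersecting $G$ yields $\lambda(t,p,b)\le\lambda(t+1,p+1,b)$. If $\lambda(t,p,b)=\infty$, the same chain applied to families $G$ with $\lambda(G,p,b)$ arbitrarily large forces $\lambda(t+1,p+1,b)=\infty$ as well, so the inequality still holds.

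I do not expect a genuine obstacle; the only real idea is the simultaneous ``$+1$'' padding by a common new vertex, and the sole points requiring care are verifying the min-identity in the degenerate regime $|e|\le p$ (where both notions of partial perfection coincide with injectivity on the hyperedge) and handling the infinite-value case just noted.
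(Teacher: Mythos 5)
Your proposal is correct and follows essentially the same route as the paper: add a single new vertex to every hyperedge to obtain a $(t+1)$-intersecting hypergraph, then restrict an optimal $(G',p+1,b)$ system back to the original ground set. Your write-up is actually a bit more careful than the paper's (the explicit identity $\min(p+1,|e|+1)-1=\min(p,|e|)$ and the handling of the case where the maximum defining $\lambda(t,p,b)$ is not attained), but the underlying argument is identical.
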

\begin{proof}
	Consider a $t$-intersecting hypergraph $G$ with minimum cardinality of a partial $(G,p,b)$ system exactly $\lambda(t,p,b)$. 
	We make it $k$-uniform by adding extra vertices.
	Our first task is to get a $t+1$-intersecting hypergraph, which we can achieve by adding a vertex $v$ to the vertex set $V(G)$ and adding $v$ to every hyperedge $e \in E(G)$. Let the new hypergraph be $G'$. Note that $\lambda(G',p+1,b)$ is a lower  bound  for $\lambda(t+1,p+1,b)$. So, all we need to show is $\lambda(G',p+1,b) \geq \lambda(t,r,p)$ in order to complete the proof.
	Let $x=\lambda(G',p+1,b)$. 
	Let $H'$ be a $(G',p+1,b)$ system of size $x$. 
	We construct another family $H$ of partial $(G,p,b)$ system by keeping the same hash values as in $H'$ on $V' \setminus \{v\}$. 
	It follows that $\lambda(G,p,b) \leq x$.
	Therefore,  $\lambda(t,p,b) = \lambda(G,p,b) \leq \lambda(G',p+1,b) \leq \lambda(t+1,p+1,b)$.
	\qed
\end{proof}
This bound implies a rather strange property of $t$-intersecting hypergraphs which is given as a corollary below.
\begin{corollary}\label{cor:2}
	$\lambda(t,p,b) \geq \lambda(t-1,p-1,b) \geq \ldots \geq \lambda(0,p-t,b) = \infty$ for $p-t \geq 2$.
\end{corollary}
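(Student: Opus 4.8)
The plan is simply to iterate the immediately preceding theorem, $\lambda(t+1,p+1,b)\geq\lambda(t,p,b)$, a total of $t$ times, decrementing the intersection parameter and the perfectness parameter in lockstep. Reading the theorem with $t$ replaced by $t-1$ and $p$ replaced by $p-1$ gives $\lambda(t,p,b)=\lambda\big((t-1)+1,(p-1)+1,b\big)\geq\lambda(t-1,p-1,b)$. Applying it again with $t-2,p-2$ yields $\lambda(t-1,p-1,b)\geq\lambda(t-2,p-2,b)$, and so on; after $t$ steps one lands at $\lambda(0,p-t,b)$. Concatenating the chain produces
\[
\lambda(t,p,b)\geq\lambda(t-1,p-1,b)\geq\cdots\geq\lambda(1,p-t+1,b)\geq\lambda(0,p-t,b).
\]

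To finish, I would invoke the observation recorded at the start of this section, namely that $\lambda(0,q,b)=\infty$: with no lower bound imposed on pairwise intersections, the required number of hash functions must grow with the size of the ground set, as evidenced by Equation \ref{1:eq}. Since the hypothesis is $p-t\geq 2$, the rightmost term $\lambda(0,p-t,b)$ falls squarely under this case and equals $\infty$, so the entire chain is $\infty$, which is the assertion of the corollary.

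The only point needing any care is checking that every intermediate invocation of the theorem is legitimate, i.e.\ that the parameters stay in the intended range along the way. At step $i$ (for $i=0,1,\dots,t$) the perfectness parameter is $p-i$, and one wants $p-i\geq 2$ so that $\lambda(\cdot,p-i,b)$ is the nontrivial quantity the theorem concerns; this is guaranteed precisely by $p-t\geq 2$, exactly the stated hypothesis. (Were $p-t\leq 1$, the chain would terminate harmlessly at a finite value, since every hash function is trivially partially $1$-perfect, which is why the hypothesis is imposed.) I do not anticipate a genuine obstacle here — the corollary is a bookkeeping consequence of the monotonicity theorem — so the write-up amounts to the displayed chain of inequalities together with the one-line base case $\lambda(0,p-t,b)=\infty$.
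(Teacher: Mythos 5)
Your proposal is correct and is exactly the argument the paper intends: the corollary is stated as an immediate consequence of the preceding monotonicity theorem, obtained by iterating $\lambda(t+1,p+1,b)\geq\lambda(t,p,b)$ down to $\lambda(0,p-t,b)$ and invoking the observation $\lambda(0,q,b)=\infty$ made at the start of the section. Your added check that the perfectness parameter stays at least $2$ under the hypothesis $p-t\geq 2$ is a harmless refinement of the same bookkeeping.
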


So we focus on the cases where $t \geq p-1$. We start our analysis with the case of $t=p-1$. Consider a complete $k$-uniform hypergraph $G(V,E)$, where $|V|=  k(1+x)$, for some $0< x \leq 1$ and some positive integer $i$. Assume $G$ is $p-1$ intersecting. 
Consider a hash function $h_1:V \rightarrow k$.
The most frequent $p-1$ hash values are assigned to at least $ (p-1)\lfloor \frac{k(1+x)}{k} \rfloor=2p-2$ vertices (let this set of vertices be $V_1$). So for any $k \leq 2p-2$, there exists a hyperedge that is not partially $p$ perfect with respect to $h_1$. 
As a result, at least one more hash function is needed in order to cover that hyperedge. 
All we need to show now is there exists a $p-1$ intersecting complete $k$-uniform hypergraph on 
$(1+x)k$ vertices, where $k\leq 2(p-1)$ and $0 < x \leq 1$. Choosing $k=2p-2$, $x=0.5$, we get $|V|=3p-3$. 
Observe that the number of vertices shared between any two hyperedges is at least $2*(2p-2)-3p-3=p-1$, 
hence the hypergraph is $p-1$ intersecting. Consequently, we have the following lower bound.

\begin{theorem}
	$\lambda(p-1,p,b) \geq 2 $ for $p\leq b \leq 2p-2$.
\end{theorem}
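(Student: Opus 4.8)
The plan is to exhibit a single $(p-1)$-intersecting hypergraph $G$ for which $\lambda(G,p,b)\geq 2$; since $\lambda(p-1,p,b)$ is by definition the maximum of $\lambda(G',p,b)$ over all $(p-1)$-intersecting $G'$, this is enough. Following the construction sketched above, I would fix a ground set $V$ with $|V|=3p-3$, set $k=2p-2$, and let $G$ be the complete $k$-uniform hypergraph on $V$, i.e. the hyperedges are all $(2p-2)$-subsets of $V$. The first step is immediate: for any two hyperedges $e,e'$, inclusion--exclusion gives $|e\cap e'|\geq |e|+|e'|-|V| = 2(2p-2)-(3p-3) = p-1$, so $G$ is $(p-1)$-intersecting (and it certainly has at least two hyperedges, since $2p-2\leq 3p-3$ for $p\geq 2$).

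The core of the argument is to show that no single hash function $h\colon V\to[b]$ can be partially $p$-perfect with respect to every hyperedge of $G$, which is exactly the statement $\lambda(G,p,b)\geq 2$. Fix such an $h$ and let $a_1\geq a_2\geq\cdots\geq a_b$ denote the sizes of its colour classes, so $a_1+\cdots+a_b=3p-3$. The key claim is that the $p-1$ largest colour classes together contain at least $2p-2$ vertices. Granting the claim, pick any $(2p-2)$-subset $e$ of the union of those $p-1$ classes; then $e$ is a hyperedge of $G$, yet $h$ takes at most $p-1<p=\min(p,|e|)$ distinct values on $e$, so $h$ fails to be partially $p$-perfect on $e$. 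Hence at least two hash functions are always required, giving $\lambda(G,p,b)\geq 2$ and therefore $\lambda(p-1,p,b)\geq 2$ for every $b$ in the stated range.

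It remains to prove the claim, and I would do this by contradiction, since the naive averaging bound --- which yields only about $\lceil\tfrac{3}{2}(p-1)\rceil$ vertices --- is too weak. Suppose $a_1+\cdots+a_{p-1}\leq 2p-3$. Then $a_{p-1}\leq \frac{1}{p-1}(a_1+\cdots+a_{p-1})\leq \frac{2p-3}{p-1}<2$, so $a_{p-1}\leq 1$; consequently each of the classes indexed $p,p+1,\dots,b$ has size at most $1$. There are $b-(p-1)$ such classes, and the hypothesis $b\leq 2p-2$ forces $b-(p-1)\leq p-1$, so $a_p+\cdots+a_b\leq p-1$. But then $a_1+\cdots+a_{p-1}=(3p-3)-(a_p+\cdots+a_b)\geq (3p-3)-(p-1)=2p-2$, contradicting the assumption. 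The only genuinely delicate point in the whole proof is this counting step together with keeping the constraint $b\leq 2p-2$ (equivalently $k\leq 2(p-1)$) in play throughout; everything else is bookkeeping. As a sanity check, in the boundary case $p=2$ one has $|V|=3$, $k=b=2$, and the claim reduces to the pigeonhole fact that two colours on three vertices must leave a monochromatic pair, which matches Proposition~\ref{prop:p1}.
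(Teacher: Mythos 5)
Your proposal is correct and follows essentially the same route as the paper: the same extremal example (the complete $(2p-2)$-uniform hypergraph on $3p-3$ vertices, shown $(p-1)$-intersecting by inclusion--exclusion) and the same key claim that the $p-1$ most frequent hash values of any single $h\colon V\to[b]$ cover at least $2p-2$ vertices, hence contain a hyperedge seeing at most $p-1$ values. Where you genuinely improve on the paper is the justification of that claim: the paper's averaging step, as written, computes $(p-1)\bigl\lfloor\tfrac{k(1+x)}{k}\bigr\rfloor$, which with $k=2p-2$ and $|V|=3p-3$ only yields $p-1$ vertices, not the needed $2p-2$; your contradiction argument (if the top $p-1$ classes held at most $2p-3$ vertices then all remaining classes would be singletons, and with $b-(p-1)\leq p-1$ of them the total would be at most $3p-4<3p-3$) is the correct way to use the hypothesis $b\leq 2p-2$ and closes that gap. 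So the approach is the paper's, but your counting step is tighter and actually proves the claim the paper only gestures at.
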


For the general case of $t\geq p$, consider a $pt$-uniform complete hypergraph $G$. 
In order to make $G$ $t$-intersecting, we make $|V| = 2pt-t$. 
Consider a hash function $h_1:V \rightarrow 2p-3$.
The most frequent $p-1$ hash values are assigned to at least $\lceil (p-1)\frac{2pt-t}{2p-3}\rceil=\lceil \frac{2p^2t-3pt+t}{2p-3}\rceil=\lceil pt+\frac{t}{2p-3}\rceil > pt$ vertices (note that if the hash function is $h_1:V \rightarrow 2p-2$, then we may not get this guarantee). Consequently, there exist at least one hyperedge that receives at most $p-1$ hash values by $h_1$, which needs at least one more coloring. 

\begin{theorem}
	$\lambda(t,p,b) \geq 2 $ for $p\leq b \leq 2p-3$, $t \geq p$.
\end{theorem}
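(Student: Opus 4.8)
The plan is to exhibit one explicit $t$-intersecting hypergraph $G$ for which no single hash function into $[b]$ can serve as a $(G,p,b)$ system; together with the trivial bound $\lambda(G,p,b)\ge 1$ this yields $\lambda(t,p,b)\ge 2$. Since $\lambda(t,p,b)$ is non-increasing in $b$, it suffices to handle the largest admissible value $b=2p-3$ (note the stated range is non-empty only when $p\ge 3$, so $2p-3\ge p\ge 3$). For the construction I would take $G$ to be the complete $pt$-uniform hypergraph on a ground set $V$ with $|V| = 2pt-t$.

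First I would verify that $G$ is $t$-intersecting: for any two hyperedges $e_1,e_2$ we have $|e_1\cap e_2| = |e_1|+|e_2|-|e_1\cup e_2| \ge pt+pt-(2pt-t) = t$. Next, fix an arbitrary hash function $h\colon V\to[2p-3]$ and let $V_1\subseteq V$ be the union of the $p-1$ most frequently used colour classes. By an averaging argument, $|V_1|\ge \left\lceil (p-1)\frac{2pt-t}{2p-3}\right\rceil$. The key identity is $(p-1)(2p-1) = p(2p-3)+1$, whence $(p-1)\frac{2pt-t}{2p-3} = pt+\frac{t}{2p-3}$; since $pt$ is an integer and $\frac{t}{2p-3}>0$, the ceiling is at least $pt+1$, so $|V_1|>pt$.

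Finally, because $G$ is the \emph{complete} $pt$-uniform hypergraph and $|V_1|\ge pt+1>pt$, some $pt$-subset $e\subseteq V_1$ is a hyperedge of $G$. Then $h(e)$ lies inside the set of $p-1$ colours defining $V_1$, so $|h(e)|\le p-1 < p = \min(p,|e|)$, i.e. $h$ is not partially $p$-perfect with respect to $e$. As $h$ was arbitrary, $\lambda(G,p,2p-3)\ge 2$, and monotonicity in $b$ extends this to all $p\le b\le 2p-3$.

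The step I expect to be the crux is the pigeonhole count, specifically the requirement that the inequality be \emph{strict} ($|V_1|>pt$, not merely $|V_1|\ge pt$). This is exactly why one runs the argument with $2p-3$ colours rather than $2p-2$: with $2p-2$ colours the analogous bound gives only $|V_1|\ge pt$, which would not force a hyperedge receiving at most $p-1$ distinct values. Everything else — the intersection computation, the passage from $V_1$ to a genuine hyperedge using completeness, and the reduction to $b=2p-3$ via monotonicity — is routine.
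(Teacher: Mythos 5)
Your proposal is correct and is essentially the paper's own argument: the same complete $pt$-uniform hypergraph on $2pt-t$ vertices, the same pigeonhole count $\bigl\lceil (p-1)\frac{2pt-t}{2p-3}\bigr\rceil \ge pt+1$ for the $p-1$ most frequent hash values, and the reduction to $b=2p-3$ by monotonicity in $b$. One small remark: the strict inequality $|V_1|>pt$ you flag as the crux is not actually needed, since $|V_1|\ge pt$ already yields a $pt$-subset of $V_1$, hence a hyperedge receiving at most $p-1$ values (the real failure at $b=2p-2$ is that the averaging bound drops to about $pt-t/2$, below $pt$).
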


\subsection{An upper bound}
\label{sec:2.1}
For a fixed $0 < p < $1, the $p$-biased measure of a family $\mathcal{F}$ over $[n]$ is $\mu_p(\mathcal{F}) := Pr_S[S \in F]$, where the
probability over $S$ is obtained by including each element $i \in [n]$ in $S$ independently with probability $p$.
Such a set $S$ is called a $p$-biased subset of $[n]$. The combined results of Dinur and Safra \cite{dinur2005hardness} and Friedgut \cite{friedgut2008measure}
gives the following theorem.

\begin{theorem}\label{thm:2}
Fix $t \geq 1$. Let $\mathcal{F}$ be a $t$-intersecting family. For any
$p < \frac{1}{t+1}$ , the $p$-biased measure of $\mathcal{F}$ is bounded by $\mu_p(\mathcal{F}) \leq p^t$.
\end{theorem}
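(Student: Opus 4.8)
The plan is to deduce the measure bound from the classical fixed-size intersection theorem by a limiting argument, so that the only genuinely deep ingredient is an off-the-shelf result about $k$-uniform $t$-intersecting families. Fix the ground set $[n]$ on which $\mathcal{F}$ lives and fix a rational $p < \tfrac{1}{t+1}$ (the irrational case will follow since both sides are continuous in $p$). For large $N$ set $k = \lfloor pN \rfloor$, regard $[n]$ as sitting inside $[N]$, and introduce the \emph{lift}
\[
\widetilde{\mathcal{F}}_N \;=\; \Bigl\{\, T \in \binom{[N]}{k} \;:\; T \cap [n] \in \mathcal{F} \,\Bigr\}.
\]
The decisive point is that $\widetilde{\mathcal{F}}_N$ is again $t$-intersecting: if $T_1,T_2 \in \widetilde{\mathcal{F}}_N$ then $T_1\cap[n]$ and $T_2\cap[n]$ lie in $\mathcal{F}$, whence $|T_1\cap T_2| \geq |(T_1\cap[n])\cap(T_2\cap[n])| \geq t$. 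Thus the intersecting structure transfers for free, which is exactly what makes this particular lift the right one.

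Next I would compute the uniform density of $\widetilde{\mathcal{F}}_N$. For a fixed $S \subseteq [n]$ the number of $T \in \binom{[N]}{k}$ with $T\cap[n]=S$ is $\binom{N-n}{k-|S|}$, so $\Pr_T[\,T\cap[n]=S\,] = \binom{N-n}{k-|S|}/\binom{N}{k}$. As $N\to\infty$ with $k/N\to p$ and $n$ held fixed, this hypergeometric weight converges to $p^{|S|}(1-p)^{n-|S|}$ (the hypergeometric-to-binomial limit), and summing over $S\in\mathcal{F}$ gives
\[
\frac{|\widetilde{\mathcal{F}}_N|}{\binom{N}{k}} \;=\; \Pr_T\!\bigl[\,T\cap[n]\in\mathcal{F}\,\bigr] \;\xrightarrow[N\to\infty]{}\; \sum_{S\in\mathcal{F}} p^{|S|}(1-p)^{n-|S|} \;=\; \mu_p(\mathcal{F}).
\]

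Then I would invoke the uniform intersection theorem. By Frankl's theorem (the star case of Ahlswede--Khachatrian), whenever $N \geq (t+1)(k-t+1)$ every $t$-intersecting family in $\binom{[N]}{k}$ has size at most that of the star $\{S : [t]\subseteq S\}$, namely $\binom{N-t}{k-t}$. With $k=\lfloor pN\rfloor$ and $p<\tfrac{1}{t+1}$, the hypothesis $N\geq(t+1)(k-t+1)$ holds for all large $N$ (since $(t+1)p<1$ forces $N\bigl((t+1)p-1\bigr)\to-\infty$), while the star's density satisfies $\binom{N-t}{k-t}/\binom{N}{k}\to p^t$. Applying this bound to $\widetilde{\mathcal{F}}_N$, combining with the density computation, and letting $N\to\infty$ yields $\mu_p(\mathcal{F})\leq p^t$.

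The main obstacle is concentrated entirely in the third step: the star-optimality of $t$-intersecting $k$-families in the regime $N\geq(t+1)(k-t+1)$ is a hard theorem (Frankl, with the full range of ratios settled by Ahlswede--Khachatrian) whose proof rests on shifting/compression and the kernel method, and I would use it as a black box. Everything else — the lift preserving the intersecting property, the hypergeometric-to-binomial convergence, and the elementary limit of the star density — is routine, the only care needed being uniform control of the convergence so that the inequality survives passage to the limit. An alternative, more self-contained route would replace this step by Friedgut's spectral argument, bounding the $p$-biased independence number of the disjointness-type graph via a Hoffman-type eigenvalue estimate; there the difficulty migrates into identifying the correct operator and controlling its relevant eigenvalue.
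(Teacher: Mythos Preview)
The paper does not prove this theorem at all; it is quoted as an external result (``the combined results of Dinur and Safra and Friedgut'') and used only as a black box in the proof of Corollary~\ref{cor:1} and Theorem~\ref{thm:1}. So there is no in-paper argument to compare your proposal against.

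Your proposal itself is a correct and standard derivation. The lift $\widetilde{\mathcal{F}}_N$ inherits $t$-intersection for the reason you give, the hypergeometric-to-binomial limit recovers $\mu_p(\mathcal{F})$ as the asymptotic uniform density, and in the regime $N\geq (t+1)(k-t+1)$ the star bound $\binom{N-t}{k-t}$ for $t$-intersecting $k$-uniform families is exactly Wilson's theorem (later subsumed by Ahlswede--Khachatrian), whose density tends to $p^t$. A couple of minor remarks: the rationality restriction on $p$ is unnecessary since you only use $k=\lfloor pN\rfloor$; and the attribution should really be to Wilson for the precise range $N\geq (t+1)(k-t+1)$ rather than Frankl. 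The alternative you mention --- Friedgut's Hoffman-type spectral bound in the $p$-biased cube --- is in fact the approach in one of the two references the paper cites for this theorem, so in that sense your ``alternative'' is closer in spirit to what the paper is pointing at than your main route via the uniform theorem.
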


The following is an easy corollary of the above theorem.

\begin{corollary}\label{cor:1}
Fix $t \geq 1$. Let $\mathcal{F}$ be a $t$-intersecting family. For any $p < \frac{1}{t+1}$, the probability that a $p$-biased
subset of $[n]$ contains a set $S \in \mathcal{F}$ is at most $p^t$.
\end{corollary}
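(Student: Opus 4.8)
The plan is to reduce the statement about a random set \emph{containing} a member of $\mathcal{F}$ to the statement about a random set \emph{being} a member of a suitable $t$-intersecting family, so that Theorem~\ref{thm:2} applies verbatim. To that end I would introduce the upward closure
\[
\mathcal{F}^{\uparrow} := \{\, S \subseteq [n] : S' \subseteq S \text{ for some } S' \in \mathcal{F} \,\}.
\]
By definition, the event ``a $p$-biased subset $S$ of $[n]$ contains some member of $\mathcal{F}$'' is precisely the event $S \in \mathcal{F}^{\uparrow}$, so the probability in question equals $\mu_p(\mathcal{F}^{\uparrow})$.

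The key step is to verify that $\mathcal{F}^{\uparrow}$ is again $t$-intersecting. Given $A, B \in \mathcal{F}^{\uparrow}$, choose witnesses $A', B' \in \mathcal{F}$ with $A' \subseteq A$ and $B' \subseteq B$. Since $\mathcal{F}$ is $t$-intersecting, $|A' \cap B'| \geq t$, and monotonicity of intersection under inclusion gives $|A \cap B| \geq |A' \cap B'| \geq t$. Hence every pair of sets in $\mathcal{F}^{\uparrow}$ meets in at least $t$ elements, as required.

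With this in hand I would simply apply Theorem~\ref{thm:2} to the family $\mathcal{F}^{\uparrow}$: the ground set is still $[n]$, the intersection parameter is still $t$, and $p < \frac{1}{t+1}$ by hypothesis, so the theorem yields $\mu_p(\mathcal{F}^{\uparrow}) \leq p^t$, which is exactly the claimed bound.

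I do not expect any real obstacle here; the only points requiring care are that passing to the upward closure neither enlarges the ground set nor weakens the intersection requirement, both of which are immediate from the definitions. The corollary is thus a one-line consequence of Theorem~\ref{thm:2} once the reduction to $\mathcal{F}^{\uparrow}$ is made explicit.
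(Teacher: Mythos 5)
Your proof is correct and is essentially the paper's own argument: the paper also passes to the upward closure (there called $\mathcal{T}$), observes it is $t$-intersecting, identifies the containment event with membership in the closure, and applies Theorem~\ref{thm:2}. Your only addition is spelling out the (immediate) verification that the closure remains $t$-intersecting, which the paper states without proof.
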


\begin{proof}
Consider a $t$-intersecting family $\mathcal{F}$ and let $\mathcal{T}=\{T \subseteq [n]| F \in \mathcal{F} \text{ and } F \subseteq T\}$.
Note that $\mathcal{T}$ is $t$-intersecting. Consider a $p$-biased subset $S \subseteq [n]$ for some $p < \frac{1}{t+1}$.
Observe that $S$ contains a set of $\mathcal{F}$ if and only if $S \in \mathcal{T}$. From Theorem \ref{thm:2}, we have
$\mu_p(\mathcal{T}) \leq p^t$. So, the probability that a $p$-biased
subset of $[n]$ contains a set $S \in \mathcal{F}$ is at most $p^t$.
\qed
\end{proof}

We give an upper bound on $\lambda(t,p,b)$ using the above corollary.\\

\begin{statement}[\bf Statement of Theorem \ref{thm:1}]
Let $t \geq p \geq 2$ and $b \geq p$ such that $b > (p-1)(t+1)$.
Let  $x$ denote an integer such that $\binom{b}{p-1}(\frac{p-1}{b})^{tx}< 1$.
Then, $\lambda(t,p,b) \leq x$. In particular, if $b > (p-1)(t+1)$,  for every $t$-intersecting hypergraph $G$, there exists a pair of hash functions from $[n]$ to $[b]$  that constitute a $(G,p,b)$ system.
\end{statement} 

\begin{proof}
Consider a $t$-intersecting hypergraph $G([n],E)$.
Let  $h_1, h_2, \ldots, h_x$ denote $x$ independent hash functions, where each $h_i:[n] \rightarrow b$ is a random hash function in which each vertex is assigned a hash value independently and uniformly at random from $[b]$.
Let $H$ denote a set of $p-1$ hash values.
The set $S_1 \subseteq [n]$ receiving one of the $p-1$ hash values in $H$ by the hash function $h_1$ is a random subset of $[n]$ where each element is included independently with probability $p=\frac{p-1}{b} < \frac{1}{t+1}$.  
Using Corollary \ref{cor:1}, the probability that $S_1$ contains an hyperedge from $G$ is
at most $(\frac{p-1}{b})^t$.
Similarly, the probability that $S_2$, the subset receiving one of the $p-1$ hash values in $H$ by the hash function $h_2$, contains an hyperedge from $G$ is
at most $(\frac{p-1}{b})^t$.
Continuing in this fashion, the probability that $S_x$, the subset receiving one of the $p-1$ hash values in $H$ by the hash function $h_x$, contains an hyperedge from $G$ is at most $(\frac{p-1}{b})^t$.
So, the probability that each vertex of any hyperedge $E$ from $G$  receives one of the $p-1$ hash values in $H$ in all the hash functions
$h_1,\ldots,h_x$ is at most $(\frac{p-1}{b})^{tx}$.
Applying union bound over all such collection of $p-1$ hash values, 
the probability that each vertex of any hyperedge $E$ from $G$  receiving one of the $p-1$ hash values from $\binom{b}{p-1}$ possible $(p-1)$-sized sets in all the hash functions $h_1,\ldots,h_x$ is at most $\binom{b}{p-1} (\frac{p-1}{b})^{tx} < 1$.
It follows that there exists a set of $x$ hash functions from $[n]$ to $[b]$  that constitute a $(G,p,b)$ system.

From $\binom{b}{p-1}(\frac{p-1}{b})^{tx}< 1$, solving for $x$, we get $x > (\frac{p-1}{\ln b- \ln(p-1)}+(p-1))\frac{1}{t}$.
This implies that for any $x=(\frac{p-1}{\ln b- \ln(p-1)}+(p-1))\frac{1}{t}+\epsilon$,
there exists a set of $x$ hash functions from $[n]$ to $[b]$  that constitute a $(G,p,b)$ system.
Choosing the minimum values for $b$ and $t$, we get $x \leq 2$. So, as long as $b > (p-1)(t+1)$,  for every $t$-intersecting hypergraph $G$, there exists a pair of hash functions 
from $[n]$ to $[b]$  that constitute a $(G,p,b)$ system.
\qed 
\end{proof}

The interesting case that remains open is when $b \leq (p-1)(t+1)$ where the above probabilistic method becomes useless.
The bottleneck in analysis $b > (p-1)(t+1)$ comes directly from its dependence on the measure $\mu_p(\mathcal{F})$ given by Theorem 
\ref{thm:2}. Moreover, it is also worthwhile studying the case when $b \leq (p-1)(t+1)$ taking the size of the hypergraph  into consideration.

{\small	
}
\bibliographystyle{ieeetr}

\begin{thebibliography}{1}
		
		\bibitem{fredman1984size}
		M.~L. Fredman and J.~Koml{\'o}s, ``On the size of separating systems and
		families of perfect hash functions,'' {\em SIAM Journal on Algebraic Discrete
			Methods}, vol.~5, no.~1, pp.~61--68, 1984.
		
		\bibitem{Radhakrishnan01entropyand}
		J.~Radhakrishnan, ``Entropy and counting,'' 2001.
		
		\bibitem{dinur2005hardness}
		I.~Dinur and S.~Safra, ``On the hardness of approximating minimum vertex
		cover,'' {\em Annals of mathematics}, pp.~439--485, 2005.
		
		\bibitem{friedgut2008measure}
		E.~Friedgut, ``On the measure of intersecting families, uniqueness and
		stability,'' {\em Combinatorica}, vol.~28, no.~5, pp.~503--528, 2008.
		
	\end{thebibliography}

\end{document}